% last update: may 2 2011 (Frederique)
\documentclass[10pt,conference]{IEEEtran}
\usepackage{amssymb,pb-diagram,graphicx,verbatim}
\usepackage[tight]{subfigure}

\newcommand{\probup}{p_{node}}
\newcommand{\objup}{p_{obj}}

\newcommand{\FF}{\mathbb{F}}

\newcommand{\ov}{\mathbf{o}}

\newcommand{\Pc}{\mathcal{P}}
\newcommand{\Sc}{\mathcal{S}}

\newtheorem{definition}{Definition}
\newtheorem{example}{Example}
\newtheorem{theorem}{Theorem}
\newtheorem{lemma}{Lemma}

%\title{Self-Repairing Codes for Distributed Storage from Projective Geometry}
\title{Self-Repairing Codes for Distributed Storage\\ --- A Projective Geometric Construction}
\author{
\IEEEauthorblockN{Fr\'ed\'erique Oggier}
\IEEEauthorblockA{Division of Mathematical Sciences\\
School of Physical and Mathematical Sciences \\
Nanyang Technological University, Singapore\\
Email:~frederique@ntu.edu.sg}
\and
\IEEEauthorblockN{Anwitaman Datta}
\IEEEauthorblockA{
Division of Computer Science\\
School of Computer Engineering \\
Nanyang Technological University, Singapore\\
Email:~anwitaman@ntu.edu.sg}
}
\begin{document}
\maketitle

\begin{abstract}
Self-Repairing Codes (SRC) are codes designed to suit the need of coding for distributed networked storage:
they not only allow stored data to be recovered even in the presence of node failures, they also provide a repair
mechanism where as little as two live nodes can be contacted to regenerate the data of a failed node. In this paper, we propose a new
instance of self-repairing codes, based on constructions of spreads coming from projective geometry. We study some of their properties to demonstrate the suitability of these codes for distributed networked storage.

\begin{keywords} self-repair, projective geometry, coding, distributed storage\end{keywords}
\end{abstract}

%**********************************************************************%
%
% INTRO
%
%*********************************************************************%
\section{Introduction}\label{sec:intro}

Storing digital data is a basic necessity of modern societies. The volume of data to be stored is tremendous, and is rapidly increasing. The kinds of data vary widely - from corporate and financial data repositories, archive of electronic communications to personal pictures, videos and work documents stored and shared in Web 2.0 and cloud based services, and much more. Distribution of such huge amount of data over multiple networked storage devices is thus the only practical and scalable solution.
%Networked storage systems come in very many flavors and sizes, including city sized data centers and storage area networks spanning a building using dedicated and supervised infrastructure, or virtual infrastructure based on peer-to-peer systems using other end-users' resources in a self-organizing manner.

All across the wide gamut of networked distributed storage systems design space, eventual failure of any and all individual storage devices is a given. Consequently, storing data redundantly is essential for fault tolerance. Furthermore, over a period of time, due to failures or departure of storage devices from the system, the redundancy will gradually decrease - risking the loss of the stored data, unless the redundancy is recreated. The possible ways for recreating redundancy depends on, to start with, the kind of redundancy being used.

Data redundancy can be achieved using replication - however that entails a very large storage overhead. Erasure coding based strategies in contrast can provide very good amount of redundancy for a very low storage overhead. However, when an encoded data block is lost and needs to be recreated, for traditional erasure codes, one would first need data equivalent in amount to recreate the whole object in one place (either by storing a full copy of the data, or else by downloading adequate encoded blocks), even in order to recreate a single encoded block. Such drawback of traditional erasure codes has in recent years given rise to a new flavor of coding research: designing erasure codes which need much less information to carry out the recreation of a lost encoded block.

%**********************************************************************%
%
\subsection{Related Work}

More precisely, consider a network of $n$ storage nodes, each with storage capacity $\alpha$,
where an object $\ov$ of size $B$ has to be stored.
A source possibly processes (encodes) the object $\ov$, splits it into
$n$ blocks, each of size at most $\alpha$, and stores such blocks at $n$ storage nodes.
When a data collector wants to retrieve the object, he should be able
to do so by contacting a subset of live nodes. We define $k$ as the minimum number of nodes that need to be contacted to retrieve the object, where the data collector may download upto $k\alpha$ amount of data,
and possibly process (decode) the downloaded data. For maximal distance separable (MDS) erasure codes, any arbitrary subset of $k$ nodes allow data retrievability. New nodes joining the network are assumed to perform the repair by contacting $d$ live nodes, from
each of which they download $\beta$ amount of data.

There are arguably two extreme points possible in the design-space of codes for distributed networked storage:

\textbf{(i)} Minimize the absolute amount of data transfer $d\beta$ needed to recreate the lost data from one node. Network-coding inspired analysis determines the storage-bandwidth (per repair) trade-offs, and a new family of codes called \emph{regenerating codes} (RGC) \cite{DGWK-journal,RSVR-allerton09} have been proposed, which can achieve (some points on) such a trade-off curve, under the assumption that $d\geq k$. Regenerating codes, like MDS erasure codes, allow data retrievability from any arbitrary set of $k$ nodes.

\textbf{(ii)} Minimize the number of nodes to be contacted for repairing one node failure. Recently proposed \emph{self-repairing codes} \cite{SRC} achieve this optimal, by allowing one repair while contacting only two nodes, i.e. $d=2$.
More specifically, self-repairing codes satisfy two cardinal properties, namely: (a) repairs can be performed directly through other subsets of nodes, without having to download data equivalent to that needed to reconstruct first the original object, ensuring that (b) a block is repaired from a fixed number of blocks, the number depending only on how many blocks are missing and independent of which specific blocks are missing. Note that minimization of the number of contacted nodes for a repair is achieved when the fixed number in clause (b) is in fact two.
%It is worth pointing out that codes with such self-repair characteristics necessarily will lack the minimum distance separable (MDS) property.

Homomorphic self-repairing codes (HSRC) were proposed in \cite{SRC}, which, besides satisfying the cardinal properties elaborated above, were shown to (i) have $(n-1)/2$ distinct pairs with which the data for a single missing node could be regenerated, and consequently, (ii) self-repair for up to $(n-1)/2$ node failures could be carried out simultaneously using two nodes for each self-repair, from the pool of the remaining $(n+1)/2$ live nodes.

\subsection{Contributions}

This paper proposes a new family of self-repairing codes (PSRC) derived from a projective geometric construction. Besides the fundamental difference in its construction, and apart from satisfying the cardinal properties of self-repairing codes, as well as, in fact the other properties satisfied by HSRC, PSRC has several other salient features, as summarized next:

(i) Both the encoding and self-repair processes for PSRC involve only XOR operations, unlike HSRC encoding which involved the relatively more expensive task of evaluating a polynomial.

(ii) Similar to regenerating codes \cite{DGWK-journal}, in PSRC, each encoded block (i.e., data stored by a node) comprise of several ($\alpha$) pieces. Regeneration of the whole encoded block thus can likewise be done by regenerating the individual constituent pieces. This is in contrast to HSRC, where the encoded blocks were `atomic', and hence repair of the whole encoded block had to be carried out atomically. This gives PSRC some of the advantages of regenerating codes, while also naturally retaining the advantages of self-repairing codes, and provides several additional desirable properties, as elaborated next.

(iii) For self-repair of a specific node, if one live node is chosen arbitrarily, then there are several other nodes with which the first chosen node can be paired to regenerate the lost encoded block. This is in contrast to HSRC, where there is a unique pairing for one lost node, once one live node is chosen.

(iv) While the resulting code is strictly speaking not systematic in terms of what is stored at each node, if the constituent pieces stored over the nodes are considered, then systematic reconstruction of the object is possible, though this will need communication with $\alpha k << n$ specific nodes.

%**********************************************************************%
%
% PG
%
%*********************************************************************%
\section{Background from Projective Geometry}
\label{sec:PG}

The proposed construction as described in next section relies on the notion of
spread coming from projective geometry. We thus start by providing the required background.

Consider the finite field $\FF_q$, where
$q$ is a power of a prime $p$, and a vector space of dimension
$m$ over $\FF_q$, namely, a projective space denoted $PG(m-1,q)$.
Note that we will adopt a row vector convention for the rest of the paper.

\begin{definition}\label{def:spread}
Let $\mathcal{P}$ be a projective space. A $t$-{\em spread} of
$\mathcal{P}$ is a set $\mathcal{S}$ of $t$-dimensional subspaces
of $\mathcal{P}$ which partitions $\mathcal{P}$. That is, every point of
$\mathcal{P}$ is contained in exactly one $t$-space of $\mathcal{S}$.
\end{definition}

If $\mathcal{P}$=PG($m-1,q$) is a finite projective space, then a
$t$-spread can only exist if the number of points of a $t$-space
divides the number of points of the whole space, i.e., if
$\frac{q^{t+1}-1}{q-1} |~ \frac{q^m -1}{q-1}$ and hence $(q^{t+1}-1)|
(q^m -1)$, which holds if and only if $(t+1)|~m$. Andr\'e \cite{Andre}
showed that this necessary condition is also sufficient.

\begin{theorem}\cite{Eisfeld}
In PG($m-1,q$), a $t$-spread exists if and only if $t+1|~m$.
\end{theorem}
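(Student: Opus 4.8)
The plan is to establish both implications, treating necessity quickly (it is essentially the counting argument already sketched in the paragraph preceding the statement) and devoting the bulk of the work to sufficiency, which requires an explicit construction.

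For the necessity direction I would argue by counting points. Since a $t$-spread $\Sc$ partitions the $\frac{q^m-1}{q-1}$ points of $PG(m-1,q)$ into $t$-dimensional subspaces, each carrying $\frac{q^{t+1}-1}{q-1}$ points, we get $\frac{q^{t+1}-1}{q-1}\mid\frac{q^m-1}{q-1}$, and clearing the common factor $q-1$ this is equivalent to $(q^{t+1}-1)\mid(q^m-1)$. Writing $m=(t+1)s+r$ with $0\le r<t+1$ and using the identity $q^m-1=q^r\bigl(q^{(t+1)s}-1\bigr)+(q^r-1)$ together with $(q^{t+1}-1)\mid\bigl(q^{(t+1)s}-1\bigr)$, the divisibility forces $(q^{t+1}-1)\mid(q^r-1)$, which for $0\le r<t+1$ is possible only if $r=0$, i.e.\ $(t+1)\mid m$. (Alternatively one invokes $\gcd(q^a-1,q^b-1)=q^{\gcd(a,b)}-1$.)

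For sufficiency, assume $m=(t+1)s$ and build the spread from a tower of finite fields (the classical André construction). Identify the points of $PG(m-1,q)$ with the $1$-dimensional $\FF_q$-subspaces of $\FF_{q^m}$, and simultaneously regard $\FF_{q^m}$ as an $s$-dimensional vector space over the subfield $\FF_{q^{t+1}}$. To each $1$-dimensional $\FF_{q^{t+1}}$-subspace $U\subseteq\FF_{q^m}$ associate the collection of projective points given by the $1$-dimensional $\FF_q$-subspaces contained in $U$; since $\dim_{\FF_q}U=(t+1)$, this collection is a $t$-dimensional projective subspace of $PG(m-1,q)$. Every nonzero vector lies in exactly one such $U$, namely its $\FF_{q^{t+1}}$-span, so these $t$-spaces are pairwise disjoint and cover all points, and hence form the desired $t$-spread.

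The step I expect to need the most care is the sufficiency construction, specifically the bookkeeping between the two ambient fields. One must verify that the $\FF_{q^{t+1}}^\times$-action on $\FF_{q^m}\setminus\{0\}$ refines correctly to the $\FF_q^\times$-action (using $\FF_q^\times\subseteq\FF_{q^{t+1}}^\times$), so that each block of the would-be spread consists of exactly $\frac{q^{t+1}-1}{q-1}$ projective points — precisely the point count of $PG(t,q)$ — and that the dimension relation $\dim_{\FF_q}U=(t+1)\dim_{\FF_{q^{t+1}}}U$ is applied correctly so that each block is genuinely a projective $t$-space and the blocks genuinely partition the point set rather than merely cover it.
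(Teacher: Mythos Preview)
Your proposal is correct and follows essentially the same approach as the paper: necessity via the point-counting divisibility argument, and sufficiency via the Andr\'e field-tower construction $\FF_q\subset\FF_{q^{t+1}}\subset\FF_{q^m}$, with the spread members being the $\FF_{q^{t+1}}$-lines (equivalently, the multiplicative cosets $a\FF_{q^{t+1}}^*$ the paper uses). Your write-up is in fact more complete than the paper's sketch, since you supply the Euclidean-division step showing $(q^{t+1}-1)\mid(q^m-1)\Rightarrow(t+1)\mid m$, which the paper asserts without justification.
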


A systematic construction of spreads can be obtained through field
extensions as follows.
Suppose that $t+1 |~ m$. Consider the finite fields $F_0 = \FF_q$,
$F_1 = \FF_{q^{t+1}}$ and $F_2 = \FF_{q^m}$. Then $F_0 \subseteq
F_1 \subseteq F_2$. The field $F_2$ is an $m$-dimensional vector space
$V$ over $F_0$. The subspaces of $V$ form the projective space
$\mathcal{P}$=PG($m,q$). The field $F_1$ is a $(t+1)$-dimensional
subspace of $V$ and hence a $t$-dimensional (projective) subspace of
$\mathcal{P}$. The same holds for all cosets $a F_1$, $(a \in F_2)$.
These cosets partition the multiplicative group of $F_2$. Hence they
form a $t$-spread of $\mathcal{P}$.

\begin{example}\label{ex:part} \rm
Take as base field $F_0 = \FF_{2} $, i.e., the alphabet is $\{0,1\}$.
In order to obtain planes, we consider $1$-spread, i.e., $t=1$ and hence
$F_1 = \FF_{4}$. Finally, assume $m=4$, that is $F_2 = \FF_{16}$:
\[
\begin{diagram}
\dgARROWLENGTH=1em
\node{F_2 = \FF_{16}} \arrow{s,r,-}{\frac{m}{2}} \\
\node{F_1 = \FF_{4}} \arrow{s,r,-}{2} \\
\node{F_0 = \FF_{2}}
\end{diagram}
\]
Denote by $\FF_{q}^*$ the multiplicative group of $\FF_{q}$. Recall that
$\FF_{q}^*$ is a cyclic group. Let $\omega$ and $\nu$ be the respective generators of $F_2^*$ and $F_1^*$.
We have that $\nu$ is
an element of order 3 contained in $F_2$, so $\nu = \omega^5$. Thus
$F_1^*$ can be written $ F_1^* =\{ 1, \omega^5, \omega^{10} \}$.
As $F_2^*$ can be written
\[
F_2^*   =  \{ \omega^i \}_{i=1}^{15} %\\
        =  \{ \omega^i, \omega^{5+i} , \omega^{10+i} \}_{i=1}^5 %\\
       =   \coprod_{i=1}^5 \omega^i \FF_{4}^*,
\]
we have a partition of $\FF_{16}$ into cosets of the form
$\omega^i \FF_{4}^* $, $i=1,\ldots,5$.
These five cosets define five disjoint
planes. More precisely, $\FF_{16}$ can be decomposed into direct sums
of $\FF_{2}$:
\[
\FF_{16}= \FF_{4} \oplus \nu \FF_{4} = \FF_{2} \oplus \nu \FF_{2}
         \oplus \omega \FF_{2} \omega \nu \FF_{2},
\]
so that each element of $\FF_{16}$ can be written as a $4$-tuple. For
example, the coset $\omega \FF_{4}^*$ contains the elements $\omega,
\omega \nu, \omega \nu^2$. As $\nu^2 =\nu +1$, $\omega \nu^2$ is the sum
of the two other points. Thus writing $\omega = (0,0,1,0)$ and $\omega \nu = (0,0,0,1) $,
we finally get that the plane defined by the coset $\omega \FF_{4}^*$ is
$\{ (0010), (0001), (0011)\}$.
\end{example}

%******************************************************************************************************%
%
% CODE
%
%*****************************************************************************************************%

\section{Code Construction}

Recall that our goal is to encode an object of size $B$ to be stored over $n$ nodes,
each of storage capacity $\alpha$, such that each failure can be repaired by contacting any $d$ live nodes,
$d\geq 2$. We denote by $PSRC(n,k)$ the self-repairing code with parameters $n$ and $k$ obtained
from a spread construction. %Here $n$ refers to the number of storage nodes, and in fact, while we would like to
%write $k=B/\alpha$, we will see below that one can always retrieve the object from a subset of $B/\alpha$ nodes,
%though not any subsets of $B/\alpha$ will always work.

We will assume for simplicity that we work over the base field $\FF_2$, though spreads can be constructed over larger alphabets.

%************************************************************************************%
\subsection{Setting the Parameters and Encoding}

1) We first set $m=B$, so that we are working with elements in $F_2=\FF_{q^B}$, that is $B$-dimensional vectors over $\FF_2$.

2) Consider a $t$-spread $\Sc$ formed of $t$-dimensional subspaces of $\Pc$ such that $t+1|B$. In particular, take $F_1=\FF_{q^{t+1}}$. Since every subspace is a $(t+1)$-dimensional vector space over
$\FF_2$, it is described by a $\FF_2$-basis containing $(t+1)$ vectors. We thus set $t+1=\alpha$, and assign to each node an $\FF_2$-basis containing $\alpha$ vectors. The number of nodes that will store the object is consequently (at most)
\[
n =\frac{2^B-1}{2^\alpha-1}.
\]
Since we must take $\alpha|B$, that is $B=b\alpha$, we can further write
\begin{equation}\label{eq:n}
n=\frac{2^{b\alpha}-1}{2^{\alpha}-1}=1+2^\alpha+(2^\alpha)^2+\ldots+(2^\alpha)^{b-1}.
\end{equation}

3) Let us denote by $v_i$ the collection of all $n\alpha$ vectors, ordered such that $v_1,\ldots,v_\alpha$
correspond to the first node, $v_{\alpha+1},\ldots,v_{2\alpha}$ to the second node, etc.
What the $i$th node will store is actually
\[
\{B v_{i\alpha+1}^T,\ldots,B v_{(i+1)\alpha}\}
\]
for a total storage of $\alpha$.

\begin{example}\label{ex:setting}\rm
Consider the partition described in Example \ref{ex:part}, where we recall that
$\nu^4=\nu+1$, $|\FF_{16}^*|=15$, $\nu^{15}=1$
$\omega^2=\omega+1$, $|\FF_4^*|=3$, $\omega^3=1$, $\omega=\nu^5=\nu^2+\nu$.

The final partition of the space is thus:
\begin{eqnarray*}
\FF_4^* & = & \{ (1000),(0110),(1110) \} \\
\nu\FF_4^* & = & \{(0100),(0011),(0111)\} \\
\nu^2\FF_4^* & = & \{ (0010),(1101),(1111)\} \\
\nu^3\FF_4^* & = & \{ (0001),(1010),(1011)\} \\
\nu^4\FF_4^* & = & \{(1100), (0101),(1001) \}
\end{eqnarray*}

\begin{table}\label{tab:B6}
\begin{tabular}{c|c|c}
$N_1$ to $N_7$ & $N_8$ to $N_{14}$ & $N_{15}$ to $N_{21}$\\
\hline
(100000),(110111) & (011000),(001110)&(001010),(110100)\\
(010000),(101011) & (001100),(000111)&(000101),(011010)\\
(001000),(100101) & (000110),(110011)&(110010),(001101)\\
(000100),(100010) & (000011),(101001)&(011001),(110110)\\
(000010),(010001) & (110001),(100100)&(111100),(011011)\\
(000001),(111000) & (101000),(010010)&(011110),(111101)\\
(110000),(011100) & (010100),(001001)&(001111),(101110)\\
\end{tabular}
\caption{Basis vectors for the scenario where we have $B=6$,
$\alpha=2$, $n=1+2^2+(2^2)^2=21$ nodes $N_1,\ldots,N_{21}$.}
\end{table}

This corresponds to the code parameters $B=4,~\alpha=2,~n = 1+2^2=5$ from (\ref{eq:n}).
Let us denote by $N_i$, $i=1,\ldots,5$ the 5 storing nodes, with storage capacity $\alpha=2$,
and by $\ov=(o_1,o_2,o_3,o_4)$ the object to be stored.
For example, we can use the basis vectors as follows:
\[
\begin{array}{ccc}
\mbox{node} & \mbox{basis vectors} & \mbox{data stored} \\
\hline
N_1 & v_1=(1000),~v_2=(0110)   & \{ o_1,o_2+o_3\}  \\
N_2 & v_3=(0100),~v_4=(0011)   & \{ o_2, o_3+o_4 \}  \\
N_3 & v_5=(0010),~v_6=(1101)   & \{ o_3, o_1+o_2+o_4 \}  \\
N_4 & v_7=(0001),~v_8=(1010)   & \{o_4, o_1+o_3 \}  \\
N_5 & v_9=(1100),~v_{10}=(0101)& \{ o_1+o_2, o_2+o_4 \} \\
\end{array}
\]
\end{example}

Furthermore, the first available parameters are summarized in Table \ref{tab:param}.

\begin{table}\label{tab:param}
\begin{center}
\begin{tabular}{c|c|c}
$B=b\alpha$ & $\alpha$  & $n=1+2^\alpha+\ldots (2^\alpha)^{b-1}$ \\
\hline
4 & 2 & 5 \\
6 & 2 & 21 \\
6 & 3 & 9 \\
8 & 2 & 85\\
8 & 4 & 17\\
\end{tabular}
\caption{
Set of some small available parameters for $PSRC(n,k)$.}
\end{center}
\end{table}

%************************************************************************************************%
\subsection{Repair}

We now need to make sure that the above coding strategy allows for object retrieval and repair.
We start with repair of data stored in one storage node. It was shown in \cite{SRC} for HSRC that it is possible to repair data for one node by contacting $d=2$ nodes, and there are $(n-1)/2$ such choices of 2 nodes that allow repair. This holds also for PSRC.

\begin{lemma}
Suppose we have $n$ nodes, each storing $\alpha$ pieces of data encoding an object using $PSRC(n,k)$. Then if one node $N_l$ fails, it is possible to repair it by contacting $d=2$ nodes. More precisely, for any choice of node $N_i$ among the remaining $n-1$ live nodes, there exists at least one node $N_j$ such that $N_l$ can be repaired by downloading the data stored at nodes $N_i$ and $N_j$.
\end{lemma}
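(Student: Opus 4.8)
The plan is to work directly with the field-extension description of the spread $\Sc$ recalled in Section~\ref{sec:PG}: its members are the cosets $aF_1$, $a\in F_2^{*}$, two of them coinciding exactly when their ratio lies in $F_1$. Throughout I identify a node with the $\alpha$-dimensional $\FF_2$-subspace of $F_2$ spanned by the $\alpha$ basis vectors assigned to it, so that node $N_i$ corresponds to some $S_i=a_iF_1$ and stores $\ov v^{T}$ for each basis vector $v$ of $S_i$; repairing $N_l$ then amounts to recovering $\ov e^{T}$ for each of the $\alpha$ basis vectors $e$ of $S_l$. Two preliminary facts are routine: first, since $w\mapsto\ov w^{T}$ is $\FF_2$-linear, from the contents of $N_i$ one can recompute $\ov w^{T}$ for \emph{every} $w\in S_i$ using XORs only; second, since $\Sc$ partitions the points of $\Pc$, distinct members meet only in $0$, so that $S_i\cap S_j=\{0\}$ whenever $i\neq j$.

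Given the failed node $N_l$ and an arbitrary live node $N_i$, I would form $W=S_l+S_i$. Because $S_l\cap S_i=\{0\}$, $W$ has $\FF_2$-dimension $2\alpha$; and since $W=a_lF_1+a_iF_1$ is closed under multiplication by $F_1$ (as $F_1F_1=F_1$), it is in fact an $F_1$-subspace of $F_2$, of $F_1$-dimension $2$. Hence $W\setminus\{0\}$ is partitioned by the members $wF_1$ of $\Sc$ that it contains, and there are exactly $(2^{2\alpha}-1)/(2^{\alpha}-1)=2^{\alpha}+1$ of them; two of these are $S_l$ and $S_i$ themselves, leaving $2^{\alpha}-1\ge 1$ further members. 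Choosing any such member $S_j$ and letting $N_j$ be the corresponding (necessarily distinct) node, the containments $S_i,S_j\subseteq W$ together with $S_i\cap S_j=\{0\}$ and $\dim_{\FF_2}W=2\alpha$ force $S_i\oplus S_j=W\supseteq S_l$. Thus every basis vector $e$ of $S_l$ has a unique decomposition $e=f+g$ with $f\in S_i$ and $g\in S_j$, whence $\ov e^{T}=\ov f^{T}+\ov g^{T}$, where $\ov f^{T}$ is recomputable from the pieces stored at $N_i$ and $\ov g^{T}$ from those at $N_j$. Downloading the $\alpha$ pieces of each of $N_i$ and $N_j$ and XOR-ing appropriately therefore restores all $\alpha$ pieces of $N_l$, which proves the statement; in passing it shows there are at least $2^{\alpha}-1$ admissible partners $N_j$ for each chosen $N_i$.

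The one genuinely non-routine step is the assertion that $W=S_l+S_i$ is a union of spread members, equivalently that it contains a third member besides $S_l$ and $S_i$: this is precisely where the field-extension construction is essential, since for a generic spread this normality property can fail and the argument would break down. The remaining details --- the coset identity that $aF_1=a'F_1$ exactly when $a/a'\in F_1$, and the observation that the decomposition $e=f+g$ is computable from publicly known spread data, so the repair is effective and purely XOR-based --- are straightforward, and I would only sketch them.
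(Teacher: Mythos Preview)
Your argument is correct and rests on the same underlying fact as the paper: because the spread is the field-extension spread, the $\FF_2$-span of any two of its members is an $F_1$-submodule of $F_2$ and hence a union of spread members. The paper's proof, however, is more explicit and minimal: given $N_l$ and $N_i$ it simply names one partner via $\nu^{j}=\nu^{i}+\nu^{l}$ and checks directly that $\nu^{l}\FF_{2^{\alpha}}^{*}$ lies in the span of $\nu^{i}\FF_{2^{\alpha}}^{*}$ and $(\nu^{i}+\nu^{l})\FF_{2^{\alpha}}^{*}$, since $\nu^{l}\omega=\nu^{i}\omega+(\nu^{i}+\nu^{l})\omega$ for every $\omega\in\FF_{2^{\alpha}}$. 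Your route is more structural: by observing that $W=S_l+S_i$ is a $2$-dimensional $F_1$-space and counting the $2^{\alpha}+1$ spread members it contains, you obtain $2^{\alpha}-1$ admissible partners for each fixed $N_i$. This is strictly more than the lemma asserts and in fact subsumes the paper's next result, Lemma~\ref{lem:B6} (the case $\alpha=2$, giving three partners), as an immediate corollary; indeed the $1$-dimensional $F_1$-subspaces of $W$ other than $S_l,S_i$ are precisely $(\nu^{l}+c\,\nu^{i})F_1$ for $c\in F_1^{*}$, which for $\alpha=2$ recovers the three choices $\nu^{j_1},\nu^{j_2},\nu^{j_3}$ listed there. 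The trade-off is that the paper's version hands you an explicit formula for one partner, which is what is exploited algorithmically in the example following the lemma, whereas your dimension count establishes the full multiplicity without singling any partner out.
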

\begin{proof}
The $l$th node $N_l$ stores a subspace of the form $\nu^l\FF_{2^\alpha}^*$, $l=1,\ldots,n$.
Let us assume this $l$th node fails, and a new comer joins. It contacts any node, say $N_i$.
Since $N_i$ stores $\nu^i\FF_{2^\alpha}^*$, we need to show that there exists a node $N_j$ such that
\[
\nu^i\FF_{2^\alpha}^*\coprod \nu^j\FF_{2^\alpha}^*
\]
repairs $N_l$. Now
\[
(\nu^i+\nu^l)\FF_{2^\alpha}^*\subset\nu^i\FF_{2^\alpha}^*\coprod \nu^l\FF_{2^\alpha}^*
\]
so we can take $j$ such that $\nu^j=\nu^i+\nu^l$. By combining the data stored at node $N_i$ and $N_j$, we
thus get
\[
\nu^i\FF_{2^\alpha}^*\coprod (\nu^i+\nu^l)\FF_{2^\alpha}^*
\]
which contains $\nu^l\FF_{2^\alpha}^*$.
\end{proof}

\begin{example}\rm
Let us continue with Example \ref{ex:setting}.
If say $N_1$ fails, the data pieces $o_1$ (corresponding to the basis vector $(1000)$) and $o_2+o_2$ (corresponding to
the basis vector $(0110)$) are lost. A new node joining the network can contact nodes $N_3$ and $N_4$, from which
it gets respectively $v_5=(0010)$, $v_6=(1101)$ and $v_7=(0001)$, $v_8=(1010)$.
Now $v_8+v_5$ gives $(1000)$ while $v_8+(v_6+v_7)$ gives $(0110)$.
\end{example}

Actually, in general, the redundancy for self-repair provided by $PSRC$ is even stronger than that of $HSRC$, as we now illustrate.

\begin{lemma}\label{lem:B6}
Suppose we have $n=21$ nodes, each storing $\alpha=2$ pieces of data, encoding an object of size $B=6$ using  $PSRC(21,3)$, as summarized in Table \ref{tab:B6}. Then if one node $N_l$ fails, for any choice of node $N_i$ among the remaining $20$ live nodes, there exists three nodes $N_{j_1}$, $N_{j_2}$, $N_{j_3}$ such that $N_l$ can be repaired by downloading the data stored at either nodes $N_i$ and $N_{j_1}$, or $N_i$ and $N_{j_2}$, or even $N_i$ and $N_{j_3}$.
\end{lemma}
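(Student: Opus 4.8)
The plan is to mimic the proof of the previous lemma, but now exploiting the fact that over $\FF_{2^\alpha}$ with $\alpha = 2$ the subspace $\nu^i\FF_4^*$ contains \emph{three} nonzero elements rather than just the generator. Recall that each node $N_l$ stores the $t$-space $\nu^l\FF_4^*$, which as a set of projective points is $\{\nu^l, \nu^l\omega, \nu^l\omega^2\}$ where $\omega$ generates $\FF_4^*$; note $\nu^l\omega^2 = \nu^l + \nu^l\omega$ since $\omega^2 = \omega+1$. The key observation is that the union $\nu^i\FF_4^* \coprod \nu^l\FF_4^*$ is a set of $6$ projective points, and any two of them span a line (a $1$-space, i.e.\ another spread element $\nu^j\FF_4^*$), since the spread partitions $\Pc$. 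So I would enumerate, for the fixed failed node $N_l$ and the arbitrarily chosen $N_i$, which spread elements can be ``assembled'' from points in $\nu^i\FF_4^* \cup \nu^l\FF_4^*$, and show at least three of them are distinct from $N_i$ and $N_l$ and, together with $N_i$, contain all of $\nu^l\FF_4^*$.

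Concretely, first I would note that a sum of the form $\nu^i\omega^a + \nu^l\omega^b$ (with $a,b \in \{0,1,2\}$) is a single nonzero element, hence lies in a unique spread element $\nu^{j}\FF_4^*$; call this element $S_{a,b}$. There are $9$ such pairs $(a,b)$, but they come in groups of $3$: multiplying $(\nu^i\omega^a + \nu^l\omega^b)$ by $\omega$ gives $\nu^i\omega^{a+1} + \nu^l\omega^{b+1}$, so $S_{a,b}$ depends only on $a - b \bmod 3$. Thus there are exactly three candidate partner-classes $S_0, S_1, S_2$, corresponding to $\{\nu^i + \nu^l, \nu^i\omega + \nu^l\omega^2, \nu^i\omega^2 + \nu^l\omega\}$ and its two rotations. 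The second step is to verify that from $N_i$ together with any one of these $S_c$, one recovers all of $\nu^l\FF_4^*$: given $\nu^i\omega^a$ (stored, up to $\FF_2$-combination, at $N_i$) and $\nu^i\omega^a + \nu^l\omega^b \in S_c$, their XOR is $\nu^l\omega^b$, and as $a$ ranges over $\{0,1,2\}$ with $b = a + c$ we sweep out all three points of $\nu^l\FF_4^*$; since the basis vectors stored are precisely these points, the $\FF_2$-linear combinations of $\ov$ they encode are recovered by the corresponding XORs. The third step is to check the three classes $S_0, S_1, S_2$ are pairwise distinct and distinct from the classes of $N_i$ and $N_l$ themselves — this is where a short argument is needed: if $S_c = S_{c'}$ then $\nu^i + \nu^l$ and, say, $\nu^i\omega + \nu^l\omega^2$ would lie in the same $1$-space, forcing a nontrivial $\FF_2$-dependence among $\{\nu^i, \nu^l, \nu^i\omega, \nu^l\omega\}$ that contradicts the spread property (distinct spread elements meet only in $0$); similarly $S_c = \nu^i\FF_4^*$ would force $\nu^l \in \nu^i\FF_4^*$, i.e.\ $l = i$, excluded. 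I would present this as: the four vectors $\nu^i, \nu^i\omega, \nu^l, \nu^l\omega$ are $\FF_2$-linearly independent (they span the $4$-dimensional sum of two distinct spread planes — wait, $\alpha=2$ so each is $2$-dimensional over $\FF_2$ and the sum is $4$-dimensional precisely because the spread elements are disjoint), and then every relation that would collapse two of the $S_c$ is ruled out by that independence.

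The main obstacle is the book-keeping in the third step: turning ``$S_c$ are distinct'' into a clean statement requires being careful about whether we are computing with projective points or with $\FF_2$-vectors, and about the fact that $S_c$ is determined by a coset rather than a single vector. I expect the cleanest route is to fix representatives once and for all, observe that $N_i$ and $N_l$ span a $4$-dimensional $\FF_2$-space $W = \langle \nu^i,\nu^i\omega,\nu^l,\nu^l\omega\rangle_{\FF_2}$ (dimension exactly $4$ by disjointness of the spread elements), note $W$ contains exactly $(2^4-1)/(2^2-1) = 5$ spread elements, two of which are $N_i$ and $N_l$, leaving exactly three others — and these three are forced to be $S_0, S_1, S_2$ because every point of $W$ outside $N_i \cup N_l$ is of the form $\nu^i\omega^a + \nu^l\omega^b$. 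That counting argument (already implicit in Example~\ref{ex:setting}, where $B=4$ gives exactly this $5$-element spread) simultaneously gives existence and distinctness of the three partners, completing the proof.
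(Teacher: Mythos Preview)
Your approach is essentially the paper's: both identify the three partner cosets $(\nu^i+\nu^l)\FF_4^*$, $(\nu^i+\nu^l\omega)\FF_4^*$, $(\nu^i\omega+\nu^l)\FF_4^*$ by grouping the nine cross-sums $\nu^i\omega^a+\nu^l\omega^b$ into three $\FF_4^*$-orbits and then checking that each, together with $N_i$, recovers $\nu^l\FF_4^*$. Your final counting step---observing that $W=\langle \nu^i,\nu^i\omega,\nu^l,\nu^l\omega\rangle_{\FF_2}$ is $4$-dimensional and (being $\FF_4$-closed) is partitioned by exactly five spread lines---is a tidy addition that certifies the three partners are pairwise distinct and different from $N_i,N_l$, a point the paper's proof leaves implicit.
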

\begin{proof}
Recall that $\omega$ is the generator of the cyclic group $\FF_4^*$.
We have that node $N_l$ stores $\nu^l\FF_4^*$, and $N_i$ similarly stores $\nu^i\FF_4^*$.
Now
\[
\begin{array}{c}
\nu^l\FF_4^*\coprod \nu^i\FF_4^* = \\
\{
\nu^i+\nu^l, \nu^i\omega+\nu^l\omega,\nu^i+\nu^i\omega+\nu^j+\nu^j\omega \\
\nu^i,\nu^i\omega,\nu^i+\nu^i\omega, \\
\nu^l,\nu^l\omega,\nu^l+\nu^l\omega, \\
\nu^i+\nu^l\omega, \nu^i+\nu^i\omega+\nu^l,\nu^i\omega+\nu^l+\nu^l\omega \\
\nu^i\omega+\nu^l,\nu^i+\nu^i\omega+\nu^l\omega,\nu^i+\nu^l+\nu^l\omega
\}=\\
(\nu^i+\nu^l)\FF_4^*\coprod \nu^i\FF_4^*\coprod\nu^l\FF_4^*\coprod (\nu^i+\nu^l\omega)\FF_4^*
\coprod (\nu^l+\nu^i\omega)\FF_4^*.
\end{array}
\]
Take $j_1,j_2,j_3$ such that
\[
\nu^{j_1}=\nu^i+\nu^l,~\nu^{j_2}=\nu^i+\nu^l\omega,~\nu^{j_3}=\nu^l+\nu^i\omega.
\]
We have then
\begin{eqnarray*}
(N_i,N_{j_1}) & \Rightarrow & \nu^i\FF_4^* \coprod (\nu^i+\nu^l)\FF_4^*\supset\nu^l\FF_4^*,\\
(N_i,N_{j_2}) & \Rightarrow & \nu^i\FF_4^* \coprod (\nu^i+\nu^l\omega)\FF_4^*\supset\nu^l\FF_4^*,\\
(N_i,N_{j_3}) & \Rightarrow & \nu^i\FF_4^* \coprod (\nu^l+\nu^i\omega)\FF_4^*\supset \nu^l\FF_4^*.
\end{eqnarray*}
\end{proof}

This proof actually gives an algorithm to find the different pairs that repair a given failed node.

\begin{example}
Consider the code described in Table \ref{tab:B6}, and suppose that the node $N_1$ fails, and a new comer contacts node $N_4$ which stores $\nu^3\FF_4^*$. We have
\begin{eqnarray*}
\nu^l+\nu^i       &=1+\nu^3 = \nu^{21}\nu^{11}& \Rightarrow N_{12} \\
\nu^l\omega+\nu^i &=\omega+\nu^3= \nu^{21}\nu^9 & \Rightarrow N_{10} \\
\nu^l+\nu^i\omega &=1+\nu^3\omega= \nu^4 & \Rightarrow N_5.
\end{eqnarray*}
Thus the node $N_1$ can be repaired by contacting the following three pairs all involving $N_4$:
\[
(N_4,N_{12}),~(N_4,N_{10}),~(N_4,N_5).
\]
\end{example}

%**************************************************************************************%
\begin{figure*}[htbp]
    \begin{center}
    \subfigure[\label{fig:oneminusrho}$1-\rho_x$ (determined using exhaustive enumeration)]{\includegraphics[scale=0.6]{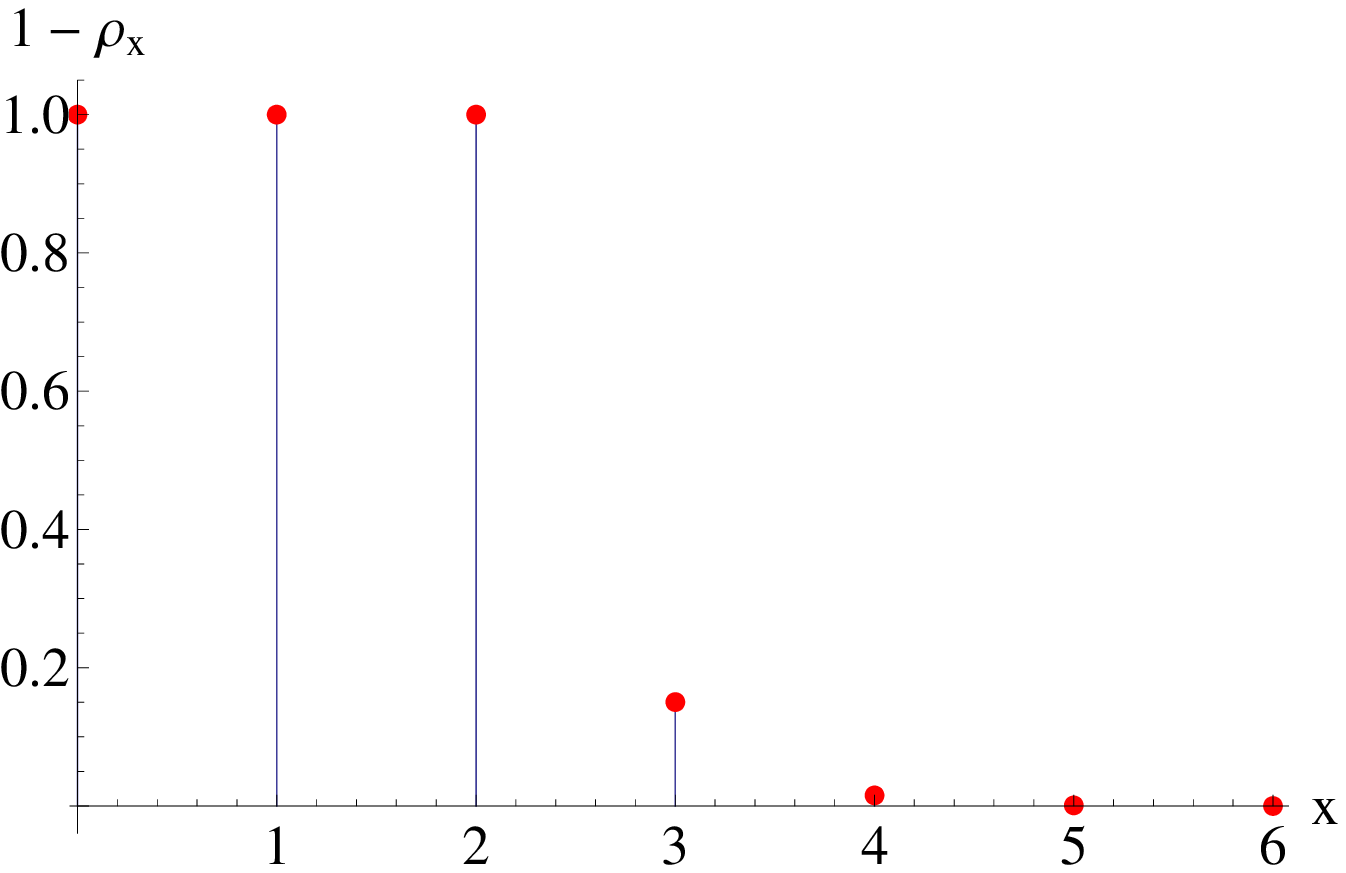}}\hspace{5mm}
    \subfigure[\label{fig:staticresilience}Static resilience (determined numerically)]{\includegraphics[scale=0.6]{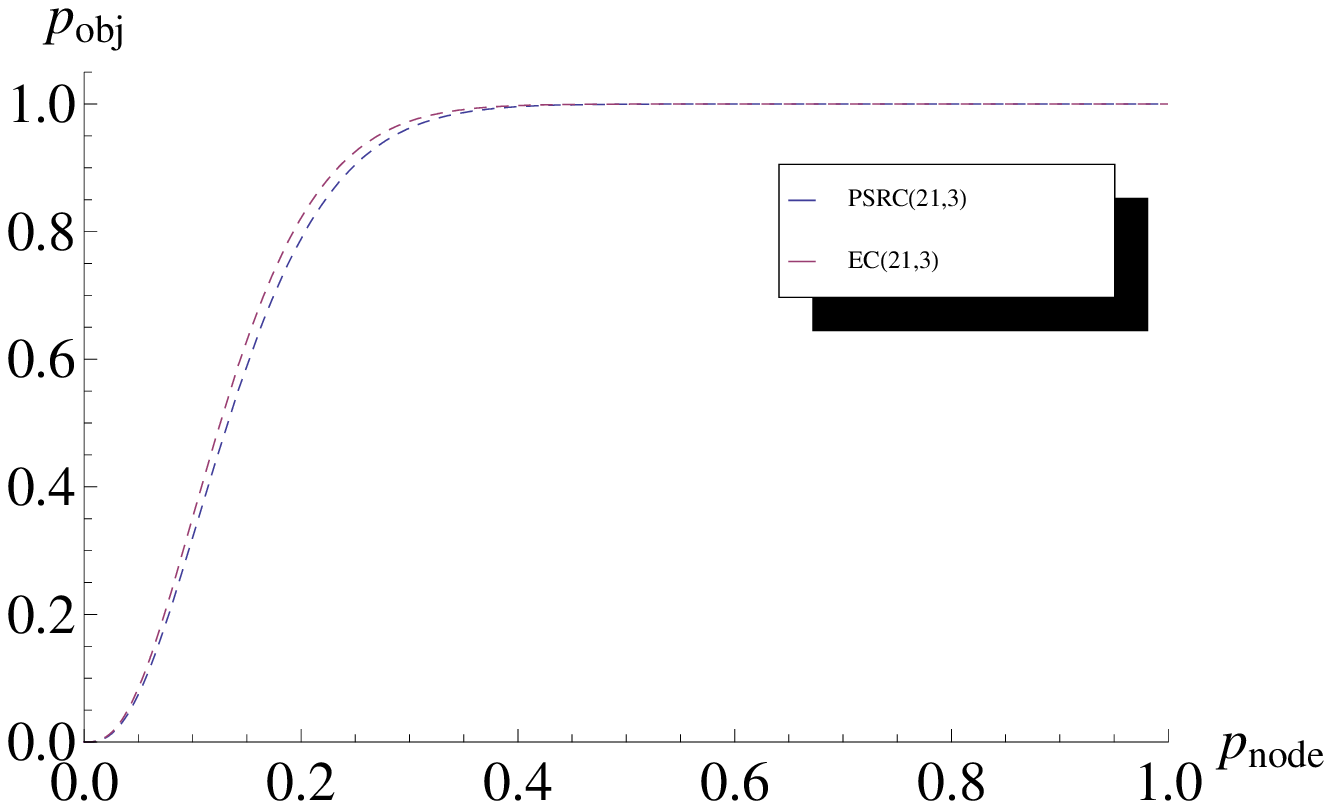}}%\hspace*{-0.05cm}
  \end{center}
  \caption{Results for $PSRC(21,3)$}
  \label{fig:results}
\end{figure*}

%***********************************************************************************%
\subsection{Object Retrieval}

If a data collector connects to any choice of $k$ nodes, then he can access upto $k\alpha$ blocks, while trying to reconstruct an object of size $B$. Thus, $k \geq B/\alpha$. Note that in the examples considered in this paper, $k = B/\alpha$. %It thus means that we need to check whether the $k\alpha$ blocks indeed yield $k\alpha$ linear equations in the object $\ov$.

\begin{lemma}\label{lem:k2}
If $k=2$, then the object can be retrieved from any choice of $k=2$ nodes,
in which case, we may see $PSRC(n,k)$ as a MDS code.
\end{lemma}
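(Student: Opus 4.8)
The plan is to combine the defining property of a spread --- that distinct members are disjoint as projective subspaces --- with an elementary dimension count. When $k=2$ a data collector sees $2\alpha$ stored symbols and must determine the $B=b\alpha$ symbols of $\ov$, so necessarily $b\le 2$; discarding the degenerate $b=1$ case, this means $b=2$, i.e. $B=2\alpha$ and $n=1+2^\alpha$, with ambient space $V=\FF_{2^{2\alpha}}$ regarded as a $2\alpha$-dimensional vector space over $\FF_2$. Each node $N_i$ is assigned an $\FF_2$-basis of one spread member $U_i=\nu^i\FF_{2^\alpha}$ (an $\alpha$-dimensional $\FF_2$-subspace of $V$) and, following the encoding rule illustrated in Example~\ref{ex:setting}, stores the scalars $\ov v^T$ as $v$ ranges over that basis.

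First I would record that for any two distinct nodes $N_i,N_j$ one has $U_i\cap U_j=\{0\}$: by Definition~\ref{def:spread} every projective point of $\Pc$ lies in exactly one member of $\Sc$, so two distinct members share no projective point and hence only the zero vector. The dimension formula then gives $\dim(U_i+U_j)=\alpha+\alpha-0=2\alpha=\dim V$, so $U_i\oplus U_j=V$. Consequently the $2\alpha=B$ basis vectors held between $N_i$ and $N_j$ together form a basis of $\FF_2^{B}$; stacking them as the rows of a $B\times B$ matrix $M$ over $\FF_2$ makes $M$ invertible, the downloaded data is exactly $M\ov^T$, and the collector recovers $\ov^T=M^{-1}(M\ov^T)$. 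Since this holds for every pair of nodes, the object is retrievable from any $k=2$ nodes.

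For the MDS assertion I would note that a collector reaching only $k-1=1$ node obtains just $\alpha<B$ independent symbols and cannot in general recover $\ov$, whereas any $2$ nodes suffice; equivalently two distinct objects agree on at most one node, so the induced $n$-block code has minimum distance $n-1=n-k+1$, attaining the Singleton bound. Hence $PSRC(n,2)$ is MDS.

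I do not expect a genuine obstacle here; the one point needing care is the passage from ``$\Sc$ partitions the projective space'' to ``the underlying $\FF_2$-subspaces meet trivially'', and the verification that the specific basis vectors handed to each node do span its subspace $U_i$ --- both of which are immediate from Section~\ref{sec:PG} and the construction. The mild subtlety worth flagging is why the statement is restricted to $k=2$: for $b>2$ two nodes supply only $2\alpha<B$ symbols, so the MDS/retrieval argument genuinely breaks down and a larger number of nodes is required.
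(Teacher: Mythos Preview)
Your argument is correct and follows essentially the same line as the paper: both hinge on the spread property that distinct members meet only in $0$, and then conclude that the $2\alpha$ basis vectors held by two nodes are jointly independent, hence span the $B$-dimensional ambient space. The paper phrases this as a short contradiction (a nontrivial dependence would force a nonzero vector into both subspaces), while you use the dimension formula $\dim(U_i+U_j)=\alpha+\alpha-0=B$ directly; these are equivalent. Your additional remarks on why $b=2$ is forced and on the Singleton bound are accurate and make the MDS clause explicit, which the paper leaves implicit.
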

\begin{proof}
If $k=2$, then each node stores $\alpha=B/2$ linearly independent vectors. Pick any two nodes say $N$ (containing $v_1,\ldots,v_\alpha$) and $N'$ (similarly storing $u_1,\ldots,u_\alpha)$. Suppose that there exists a vector $v$ in $N$ which is linearly dependent of some vectors in $N'$:
\[
v=\sum_{i=1}^\alpha a_iv_i+\sum_{j=1}^\alpha b_ju_j.
\]
Since $v\in N$ and $\sum_{i=1}^n\alpha a_iv_i \in N$, it must be that
$\sum_{j=1}^\alpha b_ju_j\in N$, a contradiction since $N$ and $N'$ are non-intersecting by the definition of spread.
\end{proof}

To recover the object, the data collector just solves the system of linear equations in $\ov$.

\begin{comment}
\begin{example}\rm
From Example \ref{ex:setting}, the data collector can choose to contact $N_3$ and $N_5$, which yields
the following matrix to invert:
\[
\left(
\begin{array}{cccc}
0 & 0 & 1 & 0 \\
1 & 1 & 0 & 1 \\
0 & 1 & 0 & 1 \\
1 & 1 & 0 & 0
\end{array}
\right).
\]
\end{example}
\end{comment}

%******************************************************************************************%
%
% Results
%
%******************************************************************************************%

In general, when $k\geq 3$, SRC codes are not maximum distance separable (MDS). A static resilience analysis provides an estimate of how much deterioration the system may suffer due to the lack of the maximum distance separability.

\emph{Static resilience} of a distributed storage system is defined as the probability that an object, once stored in the system, will continue to stay available without any further maintenance, even when a certain fraction of individual member nodes of the distributed system become unavailable. Let $\probup$ be the probability that any specific node is available. Then, under the assumptions that node availability is $i.i.d$, and no two fragments of the same object are placed on any same node, we can consider that the availability of any fragment is also $i.i.d$ with probability
$\probup$. The probability $\objup$ of recovering the object is then
\[
\objup = \sum_{x=k}^n \rho_x C_{x}^{n} \probup^x (1-\probup)^{n-x},
\]
where $\rho_x$ is the conditional probability that the stored object can be retrieved by contacting an arbitrary $x$ out of the $n$ storage nodes.

For $(n,k)$ MDS erasure codes, $\rho_x$ is a deterministic and binary value equal to one for $x \geq k$, and zero for smaller $x$. For self-repairing codes, the value is probabilistic. In Fig. \ref{fig:oneminusrho} we show for our toy example $PSRC(21,3)$ the probability that the object cannot be retrieved, i.e., $1-\rho_x$, where the values of $\rho_x$ for $x \geq k$ were determined by exhaustive search.\footnote{$\rho_x$ is zero for $x<k$ for PSRC also.}

In particular, one can list 17 unique groups of 5 nodes, whose all together 10 basis vectors generate a matrix with rank less than 6, out of the ${21\choose 5}= 20349$ unique groups of 5. This means that if we choose any 5 arbitrary nodes, the object still cannot be retrieved with a probability of 0.00083, which is rather negligible. Similarly, if we chose any arbitrary 3 nodes, the probability of unretrievability is 0.150375. In contrast, for MDS codes, the object will be retrievable from the data available at any arbitrary three nodes. Of-course, this rather marginal sacrifice provides PSRC an incredible amount of self-repairing capability. For any one node lost, as shown earlier in Lemma \ref{lem:B6}, one can choose any of the twenty remaining live nodes, and pair it with three other nodes, and regenerate the lost data.

%{1,2,7,9,19}{1,3,13,16,17}{1,4,5,10,12}{1,6,8,18,21}{1,11,14,15,20}
%{2,3,8,10,20}{2,4,14,17,18}{2,5,6,11,13}{2,12,15,16,21}{3,4,9,11,21}
%{3,5,15,18,19}{3,6,7,12,14}{4,6,16,19,20}{4,7,8,13,15}{5,7,17,20,21}
%{5,8,9,14,16}{6,9,10,15,17}

%\begin{lemma}
%Suppose we have $n=21$ nodes, each storing $\alpha=2$ pieces of data encoding an object %of size $B=6$ using  $PSRC(21,3)$. Then if $k=6$, any choice of $k$ nodes allow the %data collector to retrieve the object.
%\end{lemma}
%\begin{proof}
%We know from Lemma \ref{lem:k2} that if we pick any 2 nodes, then we obtain 4 linearly %independent vectors. They generate a vector space of 15 nonzero elements, but we %actually further know from Lemma \ref{lem:B6} that these 15 elements can be written as %a partition made of 5 copies of $\FF_4^*$. This means that if we start with any given 2 %nodes, and form a matrix containing the 4 corresponding linearly independent vectors, %we can add up to 3 more nodes (that is, their corresponding 2 vectors each) without %changing the rank of the matrix (keeping it to 4).
%In summary, by contacting 5 nodes, the data collector might
%(a) already get 6 linearly independent equations, in which case we are done,
%(b) get only 5 linearly independent equations, (c) get only 4 linearly independent %equations, which is the least possible number of linearly independent equations.
%Thus any addition of a 6th node will increase the rank:
%\end{proof}

In Fig. \ref{fig:staticresilience} we compare the static resilience $\objup$ for $PSRC(21,3)$ with respect to what could be achieved using a MDS $EC(21,3)$. The values were determined numerically, using the $\rho_x$ values evaluated as mentioned above. We note that in practice a MDS erasure code may or not exist with the specific $(n,k)$ parameters. More importantly, we notice that the degradation of static resilience of $PSRC(21,3)$ to achieve the self-repairing property is marginal with respect to that of a MDS erasure code, if such a code were to/does exist.

%****************************************************************************************%
%
%
%
%****************************************************************************************%

\section{Further discussions}

We point out a few more properties of the proposed codes.

{\em Systematic Like Code:} It is usually appreciated from an implementation perspective to use a systematic code, since it makes the object retrieval immediate. We notice that though our code is not systematic, we can however contact $B$ specific nodes (instead of $k$), namely those storing as pieces each of the canonical basis vectors of $\FF_{2^B}$ to reconstruct the object in a systematic manner.

{\em Bandwidth cost for regeneration:} Unlike HSRC, the PSRC encoded blocks are not atomic, and instead comprise of $\alpha$ pieces. Thus, similar to regenerating codes, one could also expect to regenerate an encoded block piece-by-piece, by contacting more (larger $d$) number of nodes. For example, when using $PSRC(21,3)$, if the data for node $N_1$ needs to be regenerated, one could do so by contacting two nodes and downloading four pieces (units) of data, as we have already seen. One could instead also contact $d=3$ nodes, and regenerate the two lost pieces by downloading only three units of data. For instance, by downloading (010000) from $N_2$, (110000) from $N_7$ and (000111) from $N_9$.

As noted previously, for our examples, $\alpha=B/k$, corresponding to what is known as the Minimum Storage Regeneration (MSR) point for regenerating codes. At MSR point, a node needs to contact $d \geq k$ nodes, and download $\frac{B}{k(d-k+1)}$ data from each, resulting in a total download of $\frac{Bd}{k(d-k+1)}$ data. Thus, for the same choices of $\alpha,B,k$ and with $d=3$, one would need to download 6 units of data, and for $d=4$, one would need to download 4 units of data, while $d=2$ is not allowed. Thus, for the regeneration of one lost node, PSRC can outperform regenerating codes both in terms of absolute bandwidth needed, as well as the number of nodes needed to carry out such regeneration, moreover, for upto $(n-1)/2$ failures, the regeneration overhead per node's data stays constant for PSRC. It of-course needs to be noted that, in order to achieve these very interesting performance, we sacrificed the MDS property. In practice, this sacrifice however has marginal impact, as can be observed from the resulting codes' static resilience.

%{\em Number of nodes involved.}
%In case more flexibility were to be needed in the number of nodes storing the data, there are two ways to address this issue: it is possible to actually just use the current construction but remove some nodes. Most of the properties will hold, but part of the mathematical structure (and in particular symmetry) is lost, which rends the analysis less nice. A nicer mathematical way would be to consider partial spreads \cite{Eisfeld}.

%One extreme point on the trade-off curve is called the \emph{minimum storage regeneration} (MSR) point. A \emph{(n,k)} RGC encoded data block at this point is of equivalent size as a MDS \emph{(n,k)} erasure encoded data block. Thus, regenerating codes for this point on the trade-off curve are of particular interest. \textbf{What all exists?} This paper proposes a new construction of minimum storage point regenerating codes based on a projective geometric approach. \textbf{What are the cool properties of the construction (such as, explicit, algorithmic/systematic, PG, ...) and the resulting code ???}

%****************************************************************************************%
%
% Conclusion/Future works
%
%****************************************************************************************%
\section{Concluding remarks}

In this work, we showed the existence of another instance of self-repairing codes, which are codes
tailor made to meet the peculiarities of distributed networked storage. The proposed code family in this paper is based on constructions of spreads from projective geometry. We provided a preliminary study of the properties of this new family, demonstrating that they outperform existing code families both in several quantitative as well as qualitative metrics. Further analysis to comprehend and harness these codes in practical settings are currently under investigation.

%******************************************************************************************%
%
% BIBLIO
%
%******************************************************************************************%
\section*{Acknowledgement}
F. Oggier's research for this work has been supported by the Singapore National Research Foundation grant NRF-CRP2-2007-03. A. Datta's research for this work has been supported by AcRF Tier-1 grant number RG 29/09.

\end{document}